\newtheorem{theorem}{Theorem}[section]
\newtheorem{proposition}[theorem]{Proposition}
\newcommand{\be}{\begin{equation}}
\newcommand{\ee}{\end{equation}}
\newcommand{\bea}{\begin{eqnarray}}
\newcommand{\eea}{\end{eqnarray}}
\newcommand{\beann}{\begin{eqnarray*}}
\newcommand{\eeann}{\end{eqnarray*}}
\newcommand{\ip}[2]{\langle{#1}\mid{#2}\rangle}
\newcommand{\unity}{{1\hskip -3pt \rm{I}}}
\begin{document}
\title{An efficient algorithm for the entropy rate of a hidden Markov model with unambiguous symbols}
\author{
    Jaideep Mulherkar\\
    Dhirubhai Ambani Institute of \\
    Information and Communication Technology\\
    jaideep\_mulherkar@daiict.ac.in
}
\maketitle
\begin{abstract}
We demonstrate an efficient formula to compute the entropy rate $H(\mu)$ of a hidden Markov process with $q$ output symbols where at least one symbol is unambiguously received. Using an approximation to $H(\mu)$ to the first $N$ terms we give  a $O(Nq^3$) algorithm to compute the entropy rate of the hidden Markov model. We use the algorithm to estimate the entropy rate  when the parameters of the hidden Markov model are unknown. In the case of $q =2$ the process is the output of the Z-channel and we use this fact to give bounds on the capacity of the Gilbert channel.
\end{abstract}
{\small \bf Keywords:} Entropy rate, Hidden Markov model, Algebraic measures, Gilbert channel capacity
\section{Introduction}
Entropy rate of a stationary stochastic process $\{X_n\}_{n=0}^{\infty}$ is the limit
\bea
\label{eq:entratedefn}
H(\mu) = \lim_{n\rightarrow\infty} \frac{S_n(X_1,X_2,...,X_n)}{n}
\eea
where $\mu$ is the measure associated with the process and $S_n(X_1,X_2,...,X_n)$ is the joint entropy of $\{X_1,X_2,...,X_n\}$. It amounts to the average amount of information per symbol.  In this paper we study the entropy rate of a hidden Markov process (HMP) that has at least one unambiguous received  symbol. A received symbol is unambiguous if after receiving the symbol one can conclude with certainty the state or input symbol. An example of an HMP  with one unambiguous received symbol is the output of the Z-channel with Markov input process which has been used to model optical communication systems. A closed form formula exists when the process is Markov however a tractable formula for the entropy rate of a general HMP is still an outstanding problem. Entropy rate of a HMP was first studied by Blackwell in 1957 \cite{DB1957}. Blackwell showed that the entropy rate of a HMP can be computed as an integral of a function defined on the simplex with respect to a measure. Unfortunately in most cases the measure is quite complicated and computation of the entropy rate using this method us not feasible. Birch \cite{JB1962} showed that the entropy rate can be upper and lower bounded by functions that converge exponentially fast to the entropy rate. A formula for the entropy rate of a HMP also assumes importance because of the use of hidden Markov Models in practical applications such as speech and image processing, bioinformatics and communication and information theory \cite{R1989}. Recently there has been a renewed interest in computing the entropy rate. Entropy rate calculations based on ideas from filtering theory have been done\cite{OW2006}, connections of entropy rate to Lyapunov exponents of random matrices have been studied in \cite{JSZ2004,HGG2006}, connections with statistical mechanics in \cite{ZDKA2006, AA2008}, and in capacity calculations of finite state channels in \cite{HP2003}. In this paper we follow the approach of algebraic measures \cite{FNS1992}. Algebraic measures were introduced by Fannes, Nachtergaele and Werner in the context of quantum spin systems as classical analogues of finitely correlated states  and were shown to be in one to one correspondence with functions of Markov processes or hidden Markov processes. In \cite{MMN2012} we used the approach of algebraic measures to compute the entropy rate of a hidden Markov model with at least one unambiguous symbol and showed that an approximation to the formula converges exponentially fast to the entropy rate. Our paper is organized as follows; in section \ref{sec:Background} we give background about the entropy rate problem, introduce the noise model and review the results of \cite{MMN2012}, in section \ref{sec:EntComputation} we show an efficient algorithm to compute the entropy rate and present numerical estimates of the entropy rate using a sequence of observed symbols and in section \ref{sec:Gilbert} we use the results to derive bounds on the capacity of the Gilbert channel.

\section{Background}
\label{sec:Background}
\subsection{Setup}
Consider a stationary Markov process $\{X_1,X_2,...\}$ taking values in an alphabet $N=\{0,1,...,k-1\}$. Let $E$ be the transition matrix and $\nu$ be the stationary Markov measure associated to the process. Let $F_a \in M_k(\mathbb{R}) (k\times k$ matrices with entries in $\mathbb{R}$) be the matrix with the only non-zero row to be the $a^{th}$ row of the transition matrix $E$, that is 
\bea
\label{eq:MarkovMatrix}
(F_a)_{b,c} = \delta_{a,b} \frac{\nu((b,c))}{\nu((b))}
\eea
so that $E = \sum_{a\in N} F_a$. Let $\unity \in \mathbb{R}^k$ be the vector with all components equal to $1$ and $\tau \in \mathbb{R}^k$ be such that $\tau_a = \nu((a))$, the $a^{th}$ component of the stationary distribution. The Markov measure $\nu$ can be represented in terms of a triplet $( \tau,\unity,(F_a)_{a\in N},)$. It is easy to verify that
\bea
\label{eq:HMP 1} 
\nu((\omega_1,...,\omega_n)) &=&\ip{\tau}{F_{\omega_1}...F_{\omega_n} \unity}
\eea
where $\ip{u}{v}= u^{T}v$ is the usual inner product on $\mathbb{R}^k$. Let  $\{Y_1,Y_2,...\}$ with $Y_i \in K=\{0,1,...,q-1\}$ be the hidden Markov process resulting from a noisy observation of the Markov process given by the matrix  $R = [r_{ab}]$ with $r_{ab}= Pr[Y_i=a|X_i=b]$. One can view the output $\{Y_n\}$ as a Markov source $\{X_n\}$ through a discrete memoryless channel. The noisy observation of the Markov process induces a translation invariant measure $\mu$ on $K^{\mathbb{Z}}$ which can be written as
\bea
\mu(\epsilon_1,\epsilon_2,...,\epsilon_n) = \sum_{\stackrel{\omega_1,\omega_2,...,\omega_n}{\omega_i \in N}}  r_{\epsilon_n\omega_n}r_{\epsilon_{n-1}\omega_{n-1}}\cdots r_{\epsilon_1\omega_1}\nu(\omega_n|\omega_{n-1})...\nu(\omega_2|\omega_1)\nu(\omega_1)
\eea
The hidden Markov process can be equivalently be represented by a function $\Phi:N\rightarrow K$ and the measure $\mu$ associated with  can be written as
\bea
\mu(\epsilon_1,\epsilon_2,...,\epsilon_n) = \sum_{\stackrel{\omega_1,\omega_2,...,\omega_n}{\phi(\epsilon_i)= \omega_i}} \nu(\omega_1,\omega_2,...,\omega_n)
\eea
We can also represent the hidden Markov process in terms to a triplet. Let
\bea
\label{eq:HMP 2}
E_a = \sum_{b\in L}r_{ab} F_b
\eea
It can be checked that the measure $\mu$ can be generated by triplet $(\tau,\unity,(E_a)_{a\in K})$ so that
\bea
\label{eq:measure formula}
\mu((w_m,...,w_n))=\ip{\tau}{E_{w_m}...E_{w_n} \unity}
\eea
Translation invariant measures on $K^{\mathbb{Z}}$ which can be represented in terms of triplets were termed as manifestly positive algebraic measures in \cite{FNS1992} and they were shown to be in one to one correspondence with functions of Markov processes or hidden Markov processes.

There is a well known formula for the entropy rate of the the Markov measure $\nu$. We can write the
\bea
H(\nu) = \sum_{a,b} \nu((a))E_{a,b}
\eea
A tractable formula for the entropy rate of a hidden Markov process is still an open and challenging problem. Blackwell  was the first to study the entropy rate of a hidden Markov process. He showed in \cite{DB1957} that the entropy rate of a hidden Markov process can be written as an integral of a function on a simplex with respect to a measure on the simplex. The entropy rate given by Blackwells formula is 
\bea
\label{eq:Entropy rate integral formula}
H(\mu)&=& \sum_{a\in K} \int_{\mathcal{W}} h_a(w)\phi(dw)
\eea
and $\phi(dw)$ is a probability measure on the simplex $\mathcal{W}=\{(w_1,w_2,...,w_N) | \sum_i w_i =1\}$ and $h_a$ is some function on the simplex.
However, practically computing the entropy rate of a hidden Markov process using the Blackwell formula is difficult since the Blackwell measure can be hard to evaluate. Birch \cite{JB1962} showed that the monotonically decreasing sequence $G_n = S(Y_n|Y_{n-1},Y_{n-2},...,Y_1)$ converges exponentially fast to the entropy rate, that is, there exist positive constants $M$ and $0< \rho <1$ such that
\bea
\label{eq:Birch}
G_n - H(\mu) \leq M\rho^{n-1}
\eea
It can be seen that
\bea
\label{eq:entropyratedefn2}
G_n = S(Y_n,Y_{n-1},...,Y_1) - S(Y_{n-1},Y_{n-2}...,Y_1)
\eea
One can compute the entropy rate using the equation \ref{eq:entropyratedefn2} but it is clear that computing the entropy rate using this formula by calculating the joint probabilities involved will take time that is exponential in $n$.

\subsection{Noise model and formula for the entropy rate}
In \cite{MMN2012} we considered a specific noise model which we call a hidden Markov model with at least one unambiguous received symbol. If the symbol $0$ is transmitted then it is always received as $0$ at the other end.  On the other hand if any of the other symbol is transmitted then it is either received without any error or received as the symbol $0$ with a small error probability. That is $P(Y_i=0|X_i=0) =1$, $P(Y_i=0|X_i=a) = \epsilon_a$ and $P(Y_i=a|X_i=a) = 1- \epsilon_a$ for $a=1,...,q-1$ and $P(Y_i=b|X_i=a)=0$ when $0\neq b \neq a$. Here we consider the symbols $1,2,...,q-1$ to be unambiguous, since if any one of them is received then that same symbol must have been transmitted. For $q=2$ this model is the familiar Z-channel. See figure \ref{fig:Model} for a description of the model in the case $q=2$ and $q=3$. 
\begin{figure}[ht]
\begin{center}
\includegraphics[height=30 mm,width=70 mm]{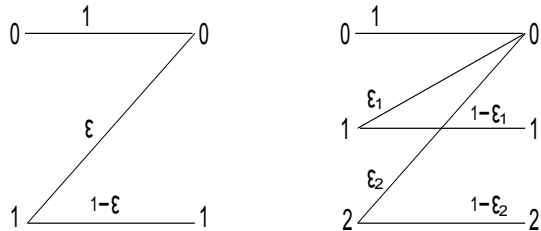}
\caption{The noise model for $q=2$ and $q=3$. For $q=2$ this noise model results in the familiar Z-channel which has been used as a model for transmission problems in optical communications. $1$ and $2$ are the unambiguous symbols for $q=3$ since if either a $1$ or a $2$ was received then we can conclude with certainty that the sent symbol was the same. If a $0$ is received then all of the three symbols could have been transmitted; $1$ and $2$ with probability $\epsilon_1$ and $\epsilon_2$ and 0 with probability $1-\epsilon_1 -\epsilon_2$.}
\label{fig:Model}
\end{center}
\end{figure}
Let the matrices $\{F_a\}$ be the matrices that describe the uncorrupted Markov source as in equation (\ref{eq:HMP 1}).
For this noise model we write the matrices $\{E_a\}$ given by equation (\ref{eq:HMP 2}) as
\bea
\label{eq:Matrix E_a defn}
E_0 &=& F_0 + \sum_{a=1}^{q-1} \epsilon_a F_a\\ \nonumber
E_a &=& (1-\epsilon_a) F_a \quad \text{for} \,a= 1,...,q-1\\ \nonumber
\sum_{a\in K} E_a &=& E
\eea
Let $\Gamma_a:\mathcal{W}\rightarrow \mathcal{W}$ be a mapping on the simplex $\mathcal{W}$ defined by
\bea
\label{eq:Gamma defn}
\Gamma_a(\nu) = \frac{E_a^T\nu}{\ip{\nu E_a}{\unity}}
\eea
Let $e_i, i=0,1...,q-1$ denote the transpose of the $(i+1)^{st}$ row of $E=[e_{ij}]$. In \cite{MMN2012} we showed that the support of the Blackwell measure for the hidden Markov model described by the noise model in this section is countable.
\begin{proposition}[\cite{MMN2012}]
\label{prop:support of phi}
For the HMP with one or more unambiguous received symbol the support of the measure $\phi$ is given by
\bea
\Delta = \overline{\{\Gamma_0^m e_j | j \in \{1,..,q-1\} ; m \in \mathbb{N}_0\}}
\eea
\end{proposition}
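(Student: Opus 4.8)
The plan is to use Blackwell's dynamical realization of the measure $\phi$. Passing to the two-sided stationary extension of $(Y_n)$, the measure $\phi$ is the law of the one-step predictor $W_0:=\big(P(X_1=i\mid Y_0,Y_{-1},Y_{-2},\dots)\big)_{i\in N}$, a random point of $\mathcal W$, and for any \emph{finite} past the corresponding predictor is obtained by iterating the maps $\Gamma_a$ of (\ref{eq:Gamma defn}): observing the symbol $a$ sends the predictor of the current hidden state to the predictor of the next one, i.e.\ to its $\Gamma_a$-image. So the whole question reduces to the structure of the forward orbits of the $\Gamma_a$. The crucial point is the ``reset'' property of the unambiguous symbols: from (\ref{eq:Matrix E_a defn}), $E_j=(1-\epsilon_j)F_j$, and $F_j$ has a single nonzero row, equal to the $j$-th row of $E$; hence $F_j^{T}\nu=\nu_j\,e_j$ for every $\nu$, and therefore $\Gamma_j(\nu)=e_j$ for every $\nu\in\mathcal W$ with $\nu_j>0$ (which is automatic on the event that $Y=j$ is actually observed) and every $j\in\{1,\dots,q-1\}$. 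Consequently, after any occurrence of an unambiguous symbol the predictor forgets everything earlier: if $-m$, with $m\ge 0$, is the most recent time $\le 0$ at which $Y_{-m}=j\in\{1,\dots,q-1\}$, so that $Y_{-m+1}=\dots=Y_0=0$, then $W_0=\Gamma_0^{\,m}e_j$.

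For the inclusion $\operatorname{supp}\phi\subseteq\Delta$ I would argue as follows. By the reset property, on the event $A$ that the infinite past contains at least one unambiguous symbol we have $W_0\in\{\Gamma_0^{\,m}e_j : j\in\{1,\dots,q-1\},\ m\in\mathbb N_0\}$, so it is enough to show $P(A^c)=0$, where $A^c=\{Y_n=0\ \text{for all}\ n\le 0\}$. Since $\sum_a E_a=E$ is stochastic and each $E_a\ge 0$, the matrix $E_0$ is substochastic, and by (\ref{eq:Matrix E_a defn}) its $k$-th row sum equals $1$ for $k=0$ and $\epsilon_k<1$ for $k\ge 1$; for an irreducible $E$ this forces the spectral radius of $E_0$ to be strictly less than $1$, so that $P(Y_0=0,\dots,Y_{-n}=0)=\ip{\tau}{E_0^{\,n+1}\unity}\to 0$ and $P(A^c)=0$. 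Hence $\phi$ is carried by the closure $\Delta=\overline{\{\Gamma_0^{\,m}e_j\}}$.

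For the reverse inclusion $\Delta\subseteq\operatorname{supp}\phi$ I would show that each point $\Gamma_0^{\,m}e_j$ is attained by $W_0$ with positive probability and then invoke closedness of the support. On the cylinder $\{Y_{-m}=j,\ Y_{-m+1}=0,\dots,Y_0=0\}$ the reset property gives $W_0=\Gamma_0^{\,m}e_j$, and by (\ref{eq:measure formula}) this cylinder has probability $\ip{\tau}{E_jE_0^{\,m}\unity}=(1-\epsilon_j)\,\tau_j\,\ip{e_j}{E_0^{\,m}\unity}$, which is strictly positive under the model's standing non-degeneracy assumptions ($\tau_j>0$ by stationarity and irreducibility, $\epsilon_j<1$, and $E_0^{\,m}\unity>0$, which holds once $\epsilon_a>0$ for all $a$). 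So $P(W_0=\Gamma_0^{\,m}e_j)>0$, hence $\Gamma_0^{\,m}e_j\in\operatorname{supp}\phi$, and taking closures $\Delta\subseteq\operatorname{supp}\phi$; combined with the previous paragraph, $\operatorname{supp}\phi=\Delta$.

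The main obstacle I anticipate is not a single estimate but pinning down the conditioning conventions precisely: one must check that the object iterated by the $\Gamma_a$'s is exactly the one-step predictor $P(X_{m+1}=\cdot\mid Y_1,\dots,Y_m)$ and not the filter $P(X_m=\cdot\mid Y_1,\dots,Y_m)$, so that the reset produces the stochastic-matrix row $e_j$ rather than the point mass at state $j$, and then propagate this convention consistently through the two-sided stationary construction of $\phi$. A secondary, more bookkeeping-type issue is isolating the minimal hypotheses on $E$ and the $\epsilon_a$ under which both $P(A^c)=0$ and the positivity of the cylinder probabilities hold; under weaker hypotheses one has to keep track of exactly which $\Gamma_0^{\,m}e_j$ are realized and trim $\Delta$ accordingly.
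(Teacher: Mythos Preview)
The paper does not prove this proposition; it is merely quoted from \cite{MMN2012}, so there is no in-paper proof to compare against. Your argument is correct and is the natural one: the rank-one structure $E_j=(1-\epsilon_j)F_j$ gives the reset $\Gamma_j\equiv e_j$ for every unambiguous $j$, the strict substochasticity of $E_0$ (row sums $1,\epsilon_1,\dots,\epsilon_{q-1}$, with $E_0$ having the same zero pattern as $E$ and hence irreducible under Assumption~1) yields $P(Y_{-n}=\cdots=Y_0=0)=\ip{\tau}{E_0^{\,n+1}\unity}\to 0$, and positivity of the cylinders $\ip{\tau}{E_jE_0^{\,m}\unity}$ gives the reverse inclusion. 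Your caveat about predictor-versus-filter conventions is well taken and resolves exactly as you say: the paper's $\Gamma_a$ in (\ref{eq:Gamma defn}) acts on the one-step predictor, and the reset image is indeed the stochastic row $e_j$ rather than a point mass.
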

Next we state the assumptions and statement of the main theorem from \cite{MMN2012} for the entropy rate of the hidden Markov process under consideration. Let $p = \min_{ij} e_{ij}$ and $P=\max_{ij} e_{ij}$.
\beann
\textit{Assumption } 1:&&\\
i)&&0<p \le P < 1, \epsilon_0 = 1, \, \epsilon_a > 0 \quad \forall a \in \{1,...,q-1\}\\
ii)&& E_0 \quad \text{is a one to one mapping}
\eeann
Define
\bea
\label{eq:c_jm}
c_{j,m} = \prod_{i=1}^m \ip{\Gamma_0^{m-i} e_j}{E_0 \unity}
\eea
Let $A$ be the $q\times q-1$ matrix defined by entries.
\bea
\label{eq:A matrix 1}
A_{ij} &=& -\delta_{ij} + \sum_{m=0}^{\infty}\ip{\Gamma_0^m e_j}{E_i \unity}c_{j,m} \quad \text{if}\, i\neq q, q\neq 2 \nonumber \\ 
A_{ij} &=& 0 \qquad \text{if} \, i\neq q, q= 2 \nonumber \\
A_{qj} &=& \sum_{m=0}^{\infty} c_{j,m} 
\eea.
\beann
\label{eq:Phi and b}
\Phi = [\phi(e_1)\cdots\phi(e_{q-1})]^{T} \in \mathbb{R}^{q-1}, b  = [0\, 0\cdots1]^{T} \in \mathbb{R}^q
\eeann
Here $\phi(e_i)$ is the weight of measure $\phi$ at the point $e_i \in \mathbb{R}^q$.
Let $h_a:\mathcal{W}\rightarrow \mathbb{R}$ be the function defined as
\bea
h_a(\nu) = -\ip{\nu}{E_a \unity}\log\ip{\nu}{E_a \unity}
\eea
\begin{theorem}[\cite{MMN2012}]
\label{thm:Entropy rate}
Under \textit{Assumption} 1  the entropy rate of the measure $\mu$ associated with the hidden Markov process with the noise model described in this section is given by
\bea
\label{eq:Entropy rate formula}
H(\mu) = \sum_{j=1}^{q-1}\sum_{m=0}^{\infty}\sum_{a=0}^{q-1} h_a(\Gamma_0^m e_j) c_{j,m} \Phi_j 
\eea
\end{theorem}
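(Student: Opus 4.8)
The plan is to derive (\ref{eq:Entropy rate formula}) from Blackwell's integral formula (\ref{eq:Entropy rate integral formula}) by using Proposition \ref{prop:support of phi} to collapse the integral over the simplex into a sum over the countable support $\Delta$, and then to identify the weight that $\phi$ assigns to each point of $\Delta$ from the stationarity of the Blackwell measure under the filtering dynamics $\Gamma_a$.

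First I would recall that $\phi$ is the unique invariant distribution of the Markov chain on $\mathcal{W}$ that from a state $\nu$ moves to $\Gamma_a(\nu)$ with probability $\ip{\nu}{E_a\unity}$; with this dynamics $\sum_{a=0}^{q-1}h_a(\nu)$ is exactly the entropy of the one-step prediction at $\nu$, which is what makes (\ref{eq:Entropy rate integral formula}) hold (Blackwell \cite{DB1957}). By Proposition \ref{prop:support of phi} the support is $\Delta=\overline{\{\Gamma_0^m e_j\}}$, so writing $\pi_{j,m}=\phi(\{\Gamma_0^m e_j\})$ the formula becomes $H(\mu)=\sum_{a=0}^{q-1}\sum_{j=1}^{q-1}\sum_{m=0}^{\infty}h_a(\Gamma_0^m e_j)\,\pi_{j,m}$, provided the accumulation points that the closure adds carry no mass and the orbits $\{\Gamma_0^m e_j\}_m$ are pairwise disjoint so nothing is counted twice. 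Under Assumption 1 one application of $\Gamma_0$ pushes every coordinate of a state above $p>0$ (because the entries of $E$ are $\geq p$), so each factor $\ip{\Gamma_0^{m-i}e_j}{E_0\unity}=1-\sum_{a=1}^{q-1}(1-\epsilon_a)(\Gamma_0^{m-i}e_j)_a$ appearing in (\ref{eq:c_jm}) is bounded away from $1$; hence $c_{j,m}$ decays geometrically, the triple sum converges absolutely, and the orbits accumulate only at the attracting fixed point of $\Gamma_0$, which therefore receives zero mass, while injectivity of $E_0$ keeps the orbits disjoint.

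Next I would pin down $\pi_{j,m}$. Since for $a\geq 1$ the matrix $E_a=(1-\epsilon_a)F_a$ has only its $a$-th row nonzero, $\Gamma_a(\nu)=e_a$ for every $\nu$ with $\nu_a>0$, so in the filter chain the unique incoming transition to $\Gamma_0^m e_j$ with $m\geq 1$ comes from $\Gamma_0^{m-1}e_j$ by emitting the symbol $0$, with probability $\ip{\Gamma_0^{m-1}e_j}{E_0\unity}$. Stationarity then gives $\pi_{j,m}=\ip{\Gamma_0^{m-1}e_j}{E_0\unity}\,\pi_{j,m-1}$ for $m\geq 1$, and since $c_{j,0}=1$ induction yields $\pi_{j,m}=c_{j,m}\,\pi_{j,0}=c_{j,m}\,\phi(e_j)=c_{j,m}\,\Phi_j$. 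Substituting this into the collapsed Blackwell sum produces exactly (\ref{eq:Entropy rate formula}). For completeness one checks that the remaining balance equations at the jump states $e_i$, together with the normalization $\sum_{j,m}\pi_{j,m}=1$, are precisely the linear system $A\Phi=b$ with $A,b$ as in (\ref{eq:A matrix 1}), which is what determines $\Phi$ in practice but is not needed for the statement itself.

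The main obstacle is the analytic bookkeeping hidden in the closure in $\Delta$: one must show that $\Gamma_0^m e_j$ converges as $m\to\infty$ and that its limit has $\phi$-measure zero, equivalently that $\sum_m c_{j,m}<\infty$. This is exactly where Assumption 1 is essential — it forces the coordinate bound on $\Gamma_0(\nu)$, hence the contraction estimate on $\Gamma_0$ and the uniform bound $\ip{\nu}{E_0\unity}\leq 1-\eta<1$ on the relevant compact set — and it is the same estimate that yields the exponential (Birch-type, cf. (\ref{eq:Birch})) convergence of the partial sums of (\ref{eq:Entropy rate formula}) shown in \cite{MMN2012}. The remaining ingredients, reducing the integral to a sum and the one-line recursion for $\pi_{j,m}$, are then immediate from stationarity and the rank-one form of the matrices $E_a$ with $a\geq 1$.
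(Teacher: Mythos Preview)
The present paper does not actually prove Theorem~\ref{thm:Entropy rate}; it is quoted from \cite{MMN2012} as a background result, so there is no in-paper proof to compare against. Your outline is the natural derivation and is consistent with how the surrounding material is arranged to be used: Blackwell's formula (\ref{eq:Entropy rate integral formula}) together with Proposition~\ref{prop:support of phi} collapses the integral to a sum over $\{\Gamma_0^m e_j\}$, and the rank-one structure of $E_a=(1-\epsilon_a)F_a$ for $a\geq 1$ (so $\Gamma_a(\nu)=e_a$) plus injectivity of $E_0$ forces the single-predecessor recursion $\pi_{j,m}=\ip{\Gamma_0^{m-1}e_j}{E_0\unity}\,\pi_{j,m-1}$, whence $\pi_{j,m}=c_{j,m}\Phi_j$ by the very definition (\ref{eq:c_jm}). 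The analytic checks you isolate---disjointness of the orbits, geometric decay of $c_{j,m}$ via the bound $\ip{\Gamma_0^k e_j}{E_0\unity}=1-\sum_{a\geq 1}(1-\epsilon_a)(\Gamma_0^k e_j)_a\leq 1-p\sum_{a\geq 1}(1-\epsilon_a)<1$ under Assumption~1, and hence zero mass at the accumulation point---are the right ones and are handled correctly.
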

In \cite{MMN2012} we showed that an approximation to the formula for $H(\mu)$ converges exponentially fast to the entropy rate. For the HMP under consideration the result for the exponential convergence was much more simpler to show than Birch's general result given by equation \ref{eq:Birch}. For the approximation to $H(\mu)$ let 
\bea 
\label{eq:A matrix approximation}
A&=&\hat{A} + R
\eea
where the entries of $R$ are the tails ($(N+1)^{st}$ term onwards) of the entries of $A$. 
Let $\hat{\Phi}$ be the least square solution to 
\bea
\label{eq:Linear system}
\hat{A} \Phi = b
\eea
\bea
\label{eq:Least squares solution}
\text{Therefore} \, \hat{\Phi} = \hat{A}^{\dagger} b
\eea
where $\hat{A}^{\dagger} = (A^TA)^{-1}A^T$ is the pseudo-inverse of $A$. Define
\beann
H_N(\mu)= \sum_{j=1}^{q-1} \sum_{m=0}^N\sum_{a=0}^{q-1} h_a(\Gamma_0^m e_j) c_{j,m} \hat{\Phi}_j
\eeann
\bea
\text{and} \,\gamma   := \max_{j} \sup_{k} \sum_{a=0}^{q-1} \epsilon_a [\Gamma_0^k e_j]_a 
\eea
 We have the following theorem 
\begin{theorem}[\cite{MMN2012}]
\label{thm:Entropy rate appx}
Under \textit{Assumption} 1  the entropy rate $H(\mu)$ of the hidden Markov process with the noise described can be approximated to $O(\gamma^{N+1})$ by $H_N(\mu)$ and we have 
\bea
\label{eqn:Entropyformula}
|H(\mu)-H_N(\mu)| &\leq& B\gamma^{N+1} \quad \text{with} \,\, B = \frac{q}{1-\gamma}\big(1+\frac{q\|\hat{A}^{\dagger}\|_1}{1-\gamma}\big)
\eea
\end{theorem}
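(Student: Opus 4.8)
The plan is to split the error into a \emph{truncation} part (the terms with $m>N$, discarded in $H_N(\mu)$) and a \emph{solution-perturbation} part (the effect of replacing the exact weight vector $\Phi$, which solves $A\Phi=b$ by the construction of $A$ and $b$, with the least-squares solution $\hat\Phi=\hat A^{\dagger}b$ of the truncated system $\hat A\Phi=b$). Writing $G_{j,m}:=\sum_{a=0}^{q-1}h_a(\Gamma_0^m e_j)\,c_{j,m}$, so that $H(\mu)=\sum_{j=1}^{q-1}\sum_{m=0}^{\infty}G_{j,m}\Phi_j$ and $H_N(\mu)=\sum_{j=1}^{q-1}\sum_{m=0}^{N}G_{j,m}\hat\Phi_j$, one has
\bea
H(\mu)-H_N(\mu) &=& \sum_{j=1}^{q-1}\Big(\sum_{m=0}^{N}G_{j,m}(\Phi_j-\hat\Phi_j)\;+\;\sum_{m=N+1}^{\infty}G_{j,m}\Phi_j\Big), \nonumber
\eea
and I would bound the two groups of terms separately.

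First I would establish the geometric decay that drives everything. From $(F_a\unity)_b=\delta_{ab}$ and $E\unity=\unity$ one gets, for $\nu\in\mathcal W$, $\ip{\nu}{E_0\unity}=\sum_{a=0}^{q-1}\epsilon_a[\nu]_a$ (with the convention $\epsilon_0=1$), so every factor of $c_{j,m}=\prod_{i=1}^{m}\ip{\Gamma_0^{m-i}e_j}{E_0\unity}$ is at most $\gamma$ by the definition of $\gamma$, whence $c_{j,m}\le\gamma^m$. Also $\{\ip{\nu}{E_a\unity}\}_{a}$ is a probability vector (its sum is $\ip{\nu}{E\unity}=\ip{\nu}{\unity}=1$), so $0\le\sum_{a}h_a(\nu)\le\log q\le q$. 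Hence $|G_{j,m}|\le q\gamma^m$, giving $\sum_{m>N}|G_{j,m}|\le q\gamma^{N+1}/(1-\gamma)$ and $\sum_{m\ge0}|G_{j,m}|\le q/(1-\gamma)$. Since the atoms $\Phi_j=\phi(e_j)$ of the Blackwell measure are nonnegative with $\sum_j\Phi_j\le1$, i.e.\ $\|\Phi\|_1\le1$, the truncation term obeys
\bea
\Big|\sum_{j=1}^{q-1}\sum_{m=N+1}^{\infty}G_{j,m}\Phi_j\Big| &\le& \|\Phi\|_1\max_j\sum_{m>N}|G_{j,m}| \;\le\; \frac{q\,\gamma^{N+1}}{1-\gamma}. \nonumber
\eea

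For the solution-perturbation term I would use $b=A\Phi=\hat A\Phi+R\Phi$ together with $\hat A^{\dagger}\hat A=I_{q-1}$ (valid once $\hat A$ has full column rank) to get $\hat\Phi=\hat A^{\dagger}b=\Phi+\hat A^{\dagger}R\Phi$, hence $\|\Phi-\hat\Phi\|_1\le\|\hat A^{\dagger}\|_1\|R\|_1\|\Phi\|_1\le\|\hat A^{\dagger}\|_1\|R\|_1$. Each entry of $R$ is a tail $\sum_{m>N}\ip{\Gamma_0^m e_j}{E_i\unity}c_{j,m}$ or $\sum_{m>N}c_{j,m}$, and since $0\le\ip{\nu}{E_i\unity}\le1$ each such tail is $\le\gamma^{N+1}/(1-\gamma)$; as $A$ has $q$ rows this gives $\|R\|_1\le q\gamma^{N+1}/(1-\gamma)$. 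Combining with $\sum_{m\ge0}|G_{j,m}|\le q/(1-\gamma)$ from the previous paragraph,
\bea
\Big|\sum_{j=1}^{q-1}\sum_{m=0}^{N}G_{j,m}(\Phi_j-\hat\Phi_j)\Big| &\le& \frac{q}{1-\gamma}\,\|\Phi-\hat\Phi\|_1 \;\le\; \frac{q^2\|\hat A^{\dagger}\|_1}{(1-\gamma)^2}\,\gamma^{N+1}, \nonumber
\eea
and adding the two estimates gives $|H(\mu)-H_N(\mu)|\le\frac{q\gamma^{N+1}}{1-\gamma}\big(1+\frac{q\|\hat A^{\dagger}\|_1}{1-\gamma}\big)=B\gamma^{N+1}$.

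The hard part will be the bookkeeping behind the solution-perturbation step, specifically: (i) confirming that $\gamma<1$ under \emph{Assumption} 1 — this uses $p>0$ and the injectivity of $E_0$ to keep the orbit $\{\Gamma_0^m e_j\}$ bounded away from the face where $\sum_a\epsilon_a[\nu]_a=1$ — so that all the geometric series above converge; and (ii) justifying the pseudoinverse identity $\hat A^{\dagger}\hat A=I_{q-1}$, i.e.\ that the truncated matrix $\hat A$ (and the exact matrix $A$) has full column rank $q-1$, so that $\Phi=A^{\dagger}b$ really is the exact weight vector and the clean perturbation identity $\Phi-\hat\Phi=-\hat A^{\dagger}R\Phi$ holds; without this one is forced into the messier general perturbation theory of the Moore--Penrose inverse. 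A minor additional point is the bound $\|\Phi\|_1\le1$, which follows from Proposition \ref{prop:support of phi} together with the fact that $\phi$ is a probability measure whose support contains the points $e_j$.
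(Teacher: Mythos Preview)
The paper does not actually prove Theorem \ref{thm:Entropy rate appx}; it is quoted from \cite{MMN2012} and stated without proof, so there is no in-paper argument to compare your proposal against.

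That said, your argument is correct and is the natural one: the decomposition into a truncation error (tail in $m$) and a solution-perturbation error (replacing $\Phi$ by $\hat\Phi$), the identity $(E_0\unity)_b=\epsilon_b$ giving $c_{j,m}\le\gamma^m$, the entropy bound $\sum_a h_a(\nu)\le\log q\le q$, the estimate $\|R\|_1\le q\gamma^{N+1}/(1-\gamma)$ from the $q$ rows, and the perturbation identity $\Phi-\hat\Phi=-\hat A^{\dagger}R\Phi$ via $\hat A^{\dagger}\hat A=I_{q-1}$ all combine exactly to the stated constant $B$. Your two flagged caveats are the right ones: (i) that $\gamma<1$ under Assumption 1 (this is where positivity of $E$ and injectivity of $E_0$ enter, and in the cited work it is handled via Perron--Frobenius for the positive map $E_0^T$), and (ii) that $\hat A$ has full column rank so that the pseudoinverse identity holds; both are genuine prerequisites that the present paper simply imports from \cite{MMN2012}. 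The bound $\|\Phi\|_1\le1$ is immediate from $\phi$ being a probability measure, as you note.
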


\section{Estimation and computation of entropy rate} 
\label{sec:EntComputation}
\subsection{An efficient algorithm to compute entropy rate}
Birch's result (equation \ref{eq:Birch}) shows that the computation of the entropy rate of a general hidden Markov chain using the monotonically decreasing sequence $G_n$ converges exponentially to the actual entropy rate. However the computation of $G_n$ using equation \ref{eq:entropyratedefn2} method takes exponential time in n. In this section we give a $O(Nq^3)$ algorithm to compute the entropy rate of the hidden Markov model with unambiguous symbols using the approximate formula $H_N(\mu)$. If $\delta = |H(\mu) -H_N(\mu)|$ is accuracy with which we compute $H(\mu)$ then we get an algorithm that is $O(\log\frac{1}{\delta})$ in terms of the accuracy as compared to $O(\frac{1}{\delta})$ if we use the brute force formula $G_n$. In this section we prove these results and substantiate them with numerical computations. The algorithm to compute the entropy rate is as follows:\\
\textbf{\underline{Algorithm I:}\\}
\textbf{Inputs}:\\
i)A $q\times q$ transition matrix  $E$ of the Markov chain $E_{ij} = P(X_n=j|X_{n-1} =i)$.
ii)A $q\times q$ channel probability matrix $R$ with $R_{ab} = P(Y_n=a|X_n=b)$ according to the hidden Markov process under consideration.\\
iii) $N$ the number of terms of the approximate formula.\\
Both $E$ and $R$ should satisfy conditions specified by Assumption 1.\\
\textbf{Output}: The entropy rate $H$ of the hidden Markov model.\\
\textit{\textbf{Step 1:}}\\
From the matrices $E$ and $R$ construct matrices $E_0$ according to equation \ref{eq:Matrix E_a defn}.\\
For $j=[0\cdots q-1\}$ and $m=[1\cdots N]$  compute $\Gamma_0^m e_j$. (where $\Gamma_0$ is given by equation \ref{eq:Gamma defn} and $e_j$ is the transpose of the $(i+1)^{st}$ row of $E$ )\\
\textit{\textbf{Step 2:}}\\
For $j=[0\cdots q-1]$ and $m=[1\cdots N]$ compute $c_{j,m} = \Pi_{i=1}^m\ip{\Gamma_0^{m-i}e_j}{E_0\unity}$ .\\
\textit{\textbf{Step 3:}}\\
Compute the entries of the matrix $\hat{A}$ given by equation \ref{eq:A matrix approximation}, the pseudo-inverse $\hat{A}^{\dagger}=(A^TA)^{-1}A^T$ and then vector $\hat{\Phi} =\hat{A}^{\dagger}b$.\\
\textit{\textbf{Step 4:}}\\
Using the precalculated values of $\Gamma_0^m e_j$, $c_{j,m}$ and $\Phi_j$ in Steps 1,2 and 3 do the following computation.\\
$H=0$.\\
For $j \in [1\cdots q-1]$, $m \in [0\cdots N]$ and $a  \in [0\cdots q-1]$\\
$\quad H = H + -(\Gamma_0^m e_j)_a \log (\Gamma_0^m e_j)_a c_{j,m}\hat{\Phi}_j$.\\
Output  $H = H_N(\mu)$ as the entropy rate.
\begin{theorem}
\label{thm:timecomplexity1}
Run time complexity of Algorithm I to compute the entropy rate is $O(Nq^3)$
\end{theorem}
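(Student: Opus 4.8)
\emph{Proof plan.} The plan is to bound the running time of each of the four steps of Algorithm~I separately, working in the standard arithmetic model in which each scalar operation (addition, multiplication, division, and evaluation of $\log$) costs $O(1)$, and then to take the maximum. Throughout we use that a matrix-vector product in $\mathbb{R}^q$ costs $O(q^2)$ and an inner product in $\mathbb{R}^q$ costs $O(q)$.

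For Step~1, assembling $E_0$ and the remaining $E_a$ from $E$ and $R$ via (\ref{eq:Matrix E_a defn}) costs $O(q^2)$. The point to get right here is that the iterates must be generated \emph{incrementally}: for each fixed $j$ one produces $\Gamma_0 e_j,\Gamma_0^2 e_j,\dots,\Gamma_0^N e_j$ in succession, each obtained from its predecessor by one matrix-vector product $E_0^T\nu$ followed by a normalization, hence in time $O(q^2)$; so all $N$ iterates for a fixed $j$ cost $O(Nq^2)$, and Step~1 costs $O(Nq^3)$ over the $q$ values of $j$. Step~2 is cheaper: after precomputing $E_0\unity$ once, one uses the recursion $c_{j,m+1}=c_{j,m}\,\ip{\Gamma_0^m e_j}{E_0\unity}$ with $c_{j,0}=1$, so that each $c_{j,m}$ costs a single $O(q)$ inner product against an already-stored iterate, giving $O(Nq^2)$ in total.

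For Step~3, precomputing the $q$ vectors $E_i\unity$ costs $O(q^3)$; then each of the $q(q-1)$ entries of $\hat A$ is a truncated sum of $N+1$ terms, each term an $O(q)$ inner product against an already-stored iterate times an already-stored scalar, so assembling $\hat A$ costs $O(Nq^3)$. The remainder of Step~3 is dense linear algebra on the $q\times(q-1)$ matrix $\hat A$ --- forming $\hat A^T\hat A$, inverting the $(q-1)\times(q-1)$ result by Gaussian elimination, multiplying by $\hat A^T$, and applying the outcome to $b$ --- all of which is $O(q^3)$. Step~4 is a triple loop over $j\in\{1,\dots,q-1\}$, $m\in\{0,\dots,N\}$ and $a\in\{0,\dots,q-1\}$, that is $O(Nq^2)$ iterations each costing $O(1)$ given the quantities stored in Steps~1--3, so Step~4 costs $O(Nq^2)$.

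Summing the four bounds yields a total of $O(Nq^3)$, the bottleneck being the generation of the orbits $\{\Gamma_0^m e_j\}$ in Step~1 and the assembly of $\hat A$ in Step~3. I expect the only genuinely substantive point --- as opposed to routine bookkeeping --- to be the incremental and recursive organization of Steps~1 and~2: a naive implementation that recomputes $\Gamma_0^m e_j$, or the product defining $c_{j,m}$, from scratch for each $m$ would inflate the bound by a factor of $N$. Well-definedness of the pseudo-inverse $\hat A^{\dagger}$ is inherited from the setup of Theorem~\ref{thm:Entropy rate appx} and plays no role in the count, since Gaussian elimination on $\hat A^T\hat A$ is $O(q^3)$ regardless.
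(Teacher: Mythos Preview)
Your proof is correct and follows essentially the same step-by-step decomposition as the paper's own argument: both analyze the four steps in turn, rely on the incremental generation of the iterates $\Gamma_0^m e_j$ and the recursive update of $c_{j,m}$, obtain $O(Nq^3)$ for Steps~1 and~3, $O(Nq^2)$ for Steps~2 and~4, and conclude $O(Nq^3)$ overall. Your version is, if anything, slightly more explicit than the paper's in naming the arithmetic model and flagging that a naive (non-incremental) implementation would cost an extra factor of~$N$.
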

\begin{proof}We analyze the steps of Algorithm I\\
\begin{itemize}
\item
Each computation in \textit{Step 1} is matrix multiplication of a $q\times q$ matrix $\Gamma_0$ with a $q\times 1$ vector $e_j$ which using a standard matrix multiplication algorithm requires $O(q^2)$ time. There are $Nq$ total such computations and hence the time complexity of \textit{Step 1} is $O(Nq^3)$. 
\item
In \textit{Step 2} one requires the computation of $c_{j,m} = \Pi_{i=1}^m\ip{\Gamma_0^{m-i}e_j}{E_0\unity}$. $E_0\unity$ takes $O(q)$ time and for each $j$ the inner product $\ip{\Gamma_0^k e_j}{E_0\unity}$ can be done in $O(q)$ time. $c_{j,k}$ can be computed iteratively as $c_{j,k} = c_{j,k-1}\ip{\Gamma_0^k}{E_0\unity}$ and since there are $Nq$ such computations total time taken by \textit{Step 2} is $O(Nq^2)$.
\item
In \textit{Step 3} we first compute the matrix elements of the $q\times q-1$ matrix $\hat{A}$. Each term of $\hat{A}$ is given by equation \ref{eq:A matrix approximation} up till the first $N$ terms. Each matrix entry thus requires $O(Nq)$ time and since there are order $q^2$ terms computing $\hat{A}$ requires $O(Nq^3)$ time. Next we compute the pseudo inverse $\hat{A}^{\dagger}=(A^T A)^{-1}A^T$ which is a combination of matrix multiplication and taking inverse which by standard methods takes $O(q^3)$ time. Computing $\Phi =\hat{A}^{\dagger}b$ requires $O(q^2)$ time, hence the total time required in \textit{Step 3} is $O(Nq^3)$.
\item
Finally \textit{Step 4} has $Nq^2$ basic operations of addition or multiplication and hence requires $O(Nq^2)$ time.
\end{itemize}
From the above analysis we get that the time complexity of Algorithm I is $O(Nq^3)$.
\end{proof}
\begin{theorem}
The running time of Algorithm I to compute $H_N(\mu)$ to within $\delta$ accuracy of $H(\mu)$ is $O(\log\frac{1}{\delta})$
\end{theorem}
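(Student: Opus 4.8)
The plan is to bootstrap Theorem~\ref{thm:Entropy rate appx} (exponential accuracy in $N$) against Theorem~\ref{thm:timecomplexity1} (running time linear in $N$), treating the alphabet size $q$ as a fixed constant of the instance. First I would fix the target accuracy $\delta>0$ and determine how large $N$ must be so that $H_N(\mu)$ is guaranteed within $\delta$ of $H(\mu)$. By Theorem~\ref{thm:Entropy rate appx}, $|H(\mu)-H_N(\mu)|\le B\gamma^{N+1}$ with $0<\gamma<1$, so it suffices to pick $N$ with $B\gamma^{N+1}\le\delta$; solving, any
\[
N \;\ge\; \frac{\log(B/\delta)}{\log(1/\gamma)}-1
\]
works, and in particular $N=\big\lceil \log(B/\delta)/\log(1/\gamma)\big\rceil = O\!\big(\log\tfrac{1}{\delta}\big)$ suffices, since $\gamma$ and $B$ depend only on the fixed inputs $E$ and $R$ (modulo the subtlety addressed below). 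Feeding this $N$ into Theorem~\ref{thm:timecomplexity1}, Algorithm~I runs in time $O(Nq^3)$, which with $q$ held fixed is $O(N)=O(\log\tfrac{1}{\delta})$, as claimed.

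I would then round out the discussion by contrasting with the brute-force route: Birch's bound (\ref{eq:Birch}) forces $n=O(\log\tfrac1\delta)$ terms of $G_n$ to reach accuracy $\delta$, but evaluating $G_n$ via (\ref{eq:entropyratedefn2}) costs time exponential in $n$, i.e.\ polynomial in $1/\delta$; so Algorithm~I achieves an exponential speed-up in the accuracy parameter, matching the claim in the introduction that the cost is $O(\log\tfrac1\delta)$ rather than $O(\tfrac1\delta)$.

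The one point requiring care — and the main obstacle — is that $B=\frac{q}{1-\gamma}\big(1+\frac{q\|\hat A^{\dagger}\|_1}{1-\gamma}\big)$ itself depends on $N$ through the truncated matrix $\hat A$, so $B\gamma^{N+1}\le\delta$ is not literally ``constant times $\gamma^{N+1}$''. To handle this I would argue that $\hat A\to A$ entrywise as $N\to\infty$ (the discarded tail $R$ in (\ref{eq:A matrix approximation}) is itself $O(\gamma^{N+1})$ by the estimates behind Theorem~\ref{thm:Entropy rate appx}), and since $A$ has full column rank $q-1$ — implicit in the least-squares formulation of (\ref{eq:Linear system}) being the intended solution — the map $M\mapsto(M^TM)^{-1}M^T$ is continuous at $M=A$; hence $\|\hat A^{\dagger}\|_1$ is bounded, say by some $C^{\ast}$, for all $N$ beyond a fixed threshold $N_0$. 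Setting $B^{\ast}:=\frac{q}{1-\gamma}\big(1+\frac{qC^{\ast}}{1-\gamma}\big)$, it then suffices to take $N\ge\max\{N_0,\ \log(B^{\ast}/\delta)/\log(1/\gamma)-1\}$, which is still $O(\log\tfrac1\delta)$ (for small $\delta$ the constant $N_0$ is absorbed by the logarithmic term), and the $O(Nq^3)=O(\log\tfrac1\delta)$ running-time bound follows from Theorem~\ref{thm:timecomplexity1} exactly as above.
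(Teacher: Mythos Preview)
Your proposal is correct and follows essentially the same approach as the paper: invert the exponential bound of Theorem~\ref{thm:Entropy rate appx} to obtain $N=O(\log\frac{1}{\delta})$, then invoke the $O(Nq^3)$ running time of Theorem~\ref{thm:timecomplexity1} with $q$ fixed. You actually go further than the paper's own proof by handling the $N$-dependence of $B$ through $\hat A^{\dagger}$ (a point the paper silently ignores) and by spelling out the contrast with the brute-force $G_n$ computation.
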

\begin{proof}
From the bound of equation \ref{thm:Entropy rate appx} we get that 
\beann
|H(\mu)-H_N(\mu)| &\leq& B\gamma^{N+1} \quad \text{with} \,\, B = \frac{q}{1-\gamma}\big(1+\frac{q\|\hat{A}^{\dagger}\|_1}{1-\gamma}\big)
\eeann
Therefore to obtain a $\delta$ accuracy in computation of $H(\mu)$ we need
\beann
\delta &\leq& B\gamma^{N+1} \qquad \text{that is}\\
\frac{1}{\delta} &\geq& B\gamma^{N+1}\\
\log\big(\frac{1}{\delta}\big) &\geq& \log B + (N+1)\log \gamma\\
\eeann
dividing by the negative quantity $\log \gamma $ gives
\bea
\label{eq:appxNterms}
N+1 \geq \frac{\log \big(\frac{1}{\delta B}\big)} {\log\gamma}
\eea
Combining with theorem \ref{thm:timecomplexity1} we get that the time complexity of Algorithm I to compute $H_N(\mu)$ to $\delta$ accuracy is $O(\log\frac{1}{\delta})$.
\end{proof}
We present a numerical example for approximating the entropy rate formulas given by theorem \ref{thm:Entropy rate appx}.
Let $q=3$, $\epsilon_1 = 0.01$ and $\epsilon_2 = 0.02$. The transition matrix we use is
\beann
E &=& \begin{pmatrix} 0.4 & 0.25 & 0.35 \\ 0.25 & 0.45 & 0.3\\ 0.2 & 0.55 & 0.25 \end{pmatrix}
\eeann
The results of the entropy rate calculations are seen in table \ref{table:entropy rate 1}. A comparison with calculations of the entropy rate for the same HMP but by using the brute force formula of equation \ref{eq:entratedefn} is seen in table \ref{table:entropy rate 2}.
\begin{table}[ht]
\begin{center}
\begin{tabular}{|c|c|c|c|}	
\hline
N       &   $H_N(\mu)$       & err(N) bound            & Execution time (in secs)\\
\hline
10      &   1.520946691296695 & 0.3561                    &  0.0077     \\
\hline
20      &   1.520947864830033 & 0.0030                    &  0.0129  \\
\hline   
30      &   1.520947864969799 & $2.6758 \times 10^{-5}$   &  0.0197          \\
\hline
40      &   1.520947864969815 & $2.3193 \times 10^{-7}$   &  0.0278\\
\hline
50      &   1.520947864969815 & $2.0103 \times 10^{-9}$    & 0.0289\\  
\hline
\end{tabular}
\caption{The estimated entropy rate $H_N(\mu)$ using
the formula given by theorem \ref{thm:Entropy rate appx}.}
\label{table:entropy rate 1}
\end{center}
\end{table}
\begin{table}[ht]
\begin{center}
\begin{tabular}{|c|c|c|}	
\hline
n       &   $S_n(\mu)- S_{n-1}(\mu)$                    & Execution time (in secs)\\
\hline
5      &   1.520946036478195                            &  0.0581     \\
\hline
6      &   1.520947599473784                            &  0.127     \\
\hline   
7      &   1.520947829277763                           &  0.342      \\
\hline
8      &   1.520947860073111                           &  1.08\\
\hline
9      &   1.520947864301537                           & 3.479\\  
\hline
10     &  1.520947864877943                           &  11.14\\  
\hline
\end{tabular}
\caption{The estimated entropy rate using the brute force formula $H(\mu)=S_n(\mu) -S_{n-1}(\mu)$.}
\label{table:entropy rate 2}
\end{center}
\end{table}

\subsection{Estimating the entropy rate from an observed sequence}
In the previous subsection we have assumed that the transition matrix $E$ of the Markov chain and the noise parameters $\epsilon_a$ are known. However in many practical applications this is not the case. In this section we assume that we are only given an observation sequence and we have to estimate the entropy rate. In this method we use parameter estimation to estimate the transition matrix and the noise parameters $\epsilon_a$ and then use Algorithm I to compute the entropy rate. Let $Y$ be a vector of the observed symbols for time $t=1$ to $t=N$ and let $X$ be the corresponding hidden or state symbols and let $Z = \begin{pmatrix} Y\\ X \end{pmatrix}$. Let the unknown parameters be represented by $\theta = \begin{pmatrix} \eta \\ \epsilon\end{pmatrix}$ where $\eta$ is a vector containing the unknown transition matrix entries the and $\epsilon$ is the vector representing the noise parameters. We have 
\beann
p(Z| \theta) &=& p(Y,X|\theta) = p(Y|X,\epsilon)p(X|\eta)             
\eeann
Assuming the initial distribution of the Markov chain is uniform we get
\beann
p(Z| \theta)&=& \frac{1}{q}\Pi_{t=1}^N p(Y(t)|X(t), \epsilon) \eta_{t,t+1}
\eeann
and the log-likelihood function $L(Z|\theta)$ and the complete likelihood function $Q(\theta | \theta')$ respectively 
\bea
\label{eq:CLEq}
L(Z|\theta) &=& \log p(Z|\theta) = \sum_{t=1}^N p(Y(t)|X(t), \epsilon) + \log \eta_{t,t+1} - \log q \nonumber \\
Q(\theta | \theta')  &=& \sum_{X\in K^n} L(Z|\theta) p(X|Y,\theta')
\eea
To compute the complete likelihood function the $Q(\theta | \theta')$ conditional probabilities $p(X|Y,\theta')$ need to be estimated. This can be done using the Baum-Welsh forward-backward algorithm \cite{R1989}. Since the Markov input sequence it is only required to estimate the probabilities 
\beann
p(X(t)= k,X(t+1) = l|Y)= \frac{p(X(t)= k,X(t+1) = l,Y)}{P(Y)}
\eeann
For each $t$ we define row vector $1\times q$ vector $\alpha(t)$, $q\times 1$ vector $\beta(t)$ and $q\times q$ matrix $m(t)$
\beann
\alpha_k(t)  &=&  p(X(t)=   k , Y_1^{t-1}=y_1^{t-1})\\
\beta_t(l+1) &=& p(Y_{t+1}^N | X(t+1) = l)\\ 
m_{kl}       &=& p(Y(t)|X(t)=k)P(X(t+1)=l|X(t)=k)
\eeann
where we use the notation $p(Y_1^N) = p(Y_1,...,Y_N)$, then one can see that
\bea
\label{eq:Relalpambeta}
P(X(t)=k,X(t+1)=l,Y) = \alpha_k m_{kl}(t) \beta_l(t+1)
\eea
and we observe the following forward and backward recursion equations 
\bea
\label{eq:FBrec}
\alpha(t+1) &=& \alpha(t) m(t) \\\nonumber
\beta(t) &=& m(t)\beta(t+1) 
\eea
if $\gamma(t) = \alpha(t)\beta(t)$ then we have
\beann
\gamma(t) = \alpha(t)m(t)\beta(t+1) = \alpha(t+1)\beta(t+1) = \gamma(t+1)
\eeann
that is $\gamma(t)$ is time invariant and in fact
\beann
p(Y=y) = \sum_{k\in K} \sum_{l\in K} p(X(t)=k,X(t+1) = l,Y=y)
\eeann 
So that from equations \ref{eq:Relalpambeta} and recursion equations \ref{eq:FBrec} we get
\beann
p(Y=y) &=& \sum_{k\in K} \sum_{l\in K} p(X(t)=k,X(t+1) = l,Y=y)\\
       &=& \sum_k \alpha_k(t)\beta_k(t) 
\eeann
therefore
\bea
\label{eq:ObsvEst}
p(Y=y)&=& \alpha(t)\beta(t) = \gamma(t)
\eea
Equations \ref{eq:Relalpambeta} and \ref{eq:ObsvEst} can be used to estimate $p(X|Y,\theta')$. One can start with a guess of $\alpha(0) = [\frac{1}{q}... \frac{1}{q}]$ and $\beta(N+1)=[1,...1]'$ and then iterate using \ref{eq:FBrec} to get the values of $\alpha(1),...,\alpha(N)$ and $\beta(N),..., \beta(1)$
We can substitute equations \ref{eq:Relalpambeta} and \ref{eq:ObsvEst} to see that 
\bea
\label{Eq:}
p(X|Y,\theta') = \frac{\alpha_k(t,\theta') m_{kl}(t,\theta')\beta_l(t,\theta')}{\gamma(\theta')}
\eea
The expectation maximization algorithm involves two steps. After making an initial guess of parameter $\theta'=\theta_0$ and setting maximum number of iterations $k$ and a tolerance level for the successive estimates $\delta$ we have
\begin{itemize}
\item[i]{Expectation Step\\}
Use the Baum-Welsh forward backward algorithm described above to compute the conditional probabilities $p(X|Y,\theta')$ and  complete likelihood function $Q(\theta | \theta_j)$.
\item[ii]{Maximization Step\\}
Set the new value of $\theta'$
\beann
\theta_{j+1} = \max_\theta Q(\theta | \theta_j)
\eeann
The maximization can be done analytically using Lagrange multipliers or computed numerically. If $\|\theta_{j+1} - \theta_j\| > \delta$ and number of iterations are less $k$ than go to step 1 otherwise set $\theta = \theta_{j+1}$.
\end{itemize}
We generated 200 output symbols using $E =\begin{pmatrix}0.25&0.35& 0.4 \\ 0.15&0.45&0.4\\ 0.25 & 0.25& 0.5 \end{pmatrix}$ and $\epsilon_1 =0.02$ and $\epsilon_2 = 0.03$. The EM algorithm gives $\hat{E} =\begin{pmatrix}0.224 & 0.323 & 0.453 \\ 0.113 & 0.476 & 0.411\\ 0.24 & 0.299 & 0.46 \end{pmatrix}$ and 
$\hat{\epsilon}_1 = 0.048$ and $\hat{\epsilon}_2 = 0.042$. The entropy estimate using Algorithm I with $N=100$, $\hat{E}$, $\hat{\epsilon}_1$  and $\hat{\epsilon}_2$ is $1.51808$ and which is close to the estimate of $1.51715$ using transition matrix E and noise parameters $\epsilon_1$ and $\epsilon_2$. 


\section{Bounds on the capacity of the Gilbert channel}
\label{sec:Gilbert}
The Gilbert channel \cite{G1960} is a channel with memory which is used to model burst errors. The channel state $S_n$ at time $n$ can be good (G) or bad (B) and the channel transitions between good and bad states according to a Markov chain. When the channel is in a good state the input bit is transmitted without error and when the channel is in a bad state there is a probability of a bit flip is $h$. To model burst errors the channel is modeled so that transition probability from  a good state to a bad state (P) and bad state to good state are small (Q). The output at time n is given by $Y_n = X_n + Z_n$ where $X_n$ is the input at time $n$ and $Z_n$ is the noise and the addition is modulo 2. The noise $Z_n$ will be 0 if the state of the channel is in good state and if the channel is in B state then the noise will be 0 of 1 decided on a coin flip with bias $h$. The noise process $Z_n$ can be looked at as the output of the Z-channel with Markov input $S_n$ and with transition matrix $E=\begin{pmatrix}1-P & P\\ Q & 1-Q\end{pmatrix}$ (see figure \ref{fig:Gilbert Channel}). 
\begin{figure}[ht]
\begin{center}
\includegraphics[height=30 mm,width=70 mm]{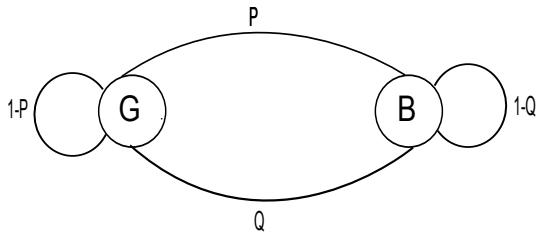}
\caption{The Gilbert channel is a channel with memory. The channel state at time $n$ $S_n$ transitions like a Markov chain between a good state (G) and a bad state (B). The transition probabilities from G state to B state is P and from B to G  is Q. In the good state the channel acts like a perfect channel while in the bad state there is a probability of a bit flip $h$. The output of the Gilbert channel can be written as $Y_n =X_n \oplus Z_n$ where $X_n$ is input process and $Z_n$ is output of the Z-Channel with the channel state Markov process $S_n$  as input. }
\label{fig:Gilbert Channel}
\end{center}
\end{figure}
The capacity of such a finite state channel is defined as
\beann
C = \lim_{n\rightarrow \infty} \frac{1}{n} \max_{p(X^n)} I(X^n;Y^n)
\eeann
where $I(X^n;Y^n)$ is the mutual information between the the input and output sequences. The capacity of finite state Markov channels have been studied in \cite{GV1996,R2006}. We use the results of the previous section to obtain bounds on the capacity of the Gilbert channel. Let $H_N$ be the approximate formula for the entropy rate of the Z-Channel given by theorem \ref{thm:Entropy rate appx} with $q=2$ and $B$ be defined as in equation \ref{eqn:Entropyformula} then we have the following theorem:
\begin{theorem}
Under Assumption 1 the capacity C of the Gilbert channel with channel noise process $\{Z_n\}$ can be upper and lower bounded for all $N \in \mathbb{N}$ as
\beann
1 + H_N(\mu) - B\gamma^{N+1} \leq C \leq 1 + H_N(\mu) + B\gamma^{N+1}
\eeann
\end{theorem}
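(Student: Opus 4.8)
The plan is to reduce the capacity of the Gilbert channel to the entropy rate of its additive noise process $\{Z_n\}$ and then quote Theorem~\ref{thm:Entropy rate appx} with $q=2$.

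First I would exploit the additive structure $Y_n = X_n \oplus Z_n$ on the group $\mathbb{Z}_2$, together with the fact that the noise block $Z^n$ is generated by the channel-state Markov chain $\{S_n\}$ and is therefore independent of the input block $X^n$. Then for every $n$ and every input distribution $p(X^n)$,
\[
I(X^n;Y^n) = H(Y^n) - H(Y^n\mid X^n) = H(Y^n) - H(Z^n\mid X^n) = H(Y^n) - H(Z^n),
\]
where the middle equality uses that $Z^n = Y^n \oplus X^n$ is determined by $(X^n,Y^n)$ and conversely $Y^n$ is determined by $(X^n,Z^n)$, and the last uses independence of $X^n$ and $Z^n$. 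Since $H(Z^n)$ does not depend on $p(X^n)$, maximizing $I(X^n;Y^n)$ amounts to maximizing $H(Y^n)$; as $Y^n$ ranges over $\{0,1\}^n$ we have $H(Y^n)\le n$, and taking $X^n$ i.i.d.\ uniform makes $Y^n$ i.i.d.\ uniform regardless of the law of $Z^n$, attaining the bound. Hence $\max_{p(X^n)} I(X^n;Y^n) = n - H(Z^n)$, and dividing by $n$ and letting $n\to\infty$ gives $C = 1 - H(\mu)$, where $H(\mu)$ is the entropy rate of $\{Z_n\}$.

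Next I would identify $\{Z_n\}$ as precisely the hidden Markov process of Section~\ref{sec:Background} in the case $q=2$: it is the $h$-noisy (equivalently $\epsilon_1 = h$) observation of the two-state chain $\{S_n\}$ with transition matrix $E=\begin{pmatrix}1-P & P\\ Q & 1-Q\end{pmatrix}$, where the good state emits $0$ with probability one and the bad state emits $0$ with probability $1-h$ and $1$ with probability $h$ --- exactly the Z-channel noise model of this section. One then checks that the Gilbert parameters, under Assumption~1, meet the hypotheses of Theorem~\ref{thm:Entropy rate appx} ($0<p\le P<1$ and $E_0$ one-to-one), so that $|H(\mu)-H_N(\mu)|\le B\gamma^{N+1}$ with $B$ as in that theorem and $q=2$. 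Combining with $C = 1 - H(\mu)$ gives, for every $N\in\mathbb{N}$, the two-sided bound $1 - H_N(\mu) - B\gamma^{N+1} \le C \le 1 - H_N(\mu) + B\gamma^{N+1}$, i.e.\ the stated estimate (the sign on $H_N(\mu)$ in the statement should be read in accordance with this derivation).

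The only real content beyond bookkeeping is the first step, and I expect two points to need care. One is the operational justification that the capacity $C=\lim_n \frac1n\max_{p(X^n)} I(X^n;Y^n)$ is in fact achievable with a matching converse: for the Gilbert channel this is standard, since it is an indecomposable finite-state channel, but it should be cited rather than taken for granted. The other is making the independence of $\{X_n\}$ and $\{Z_n\}$ and the deterministic relation $Z^n = X^n\oplus Y^n$ fully rigorous at the block level, so that the chain-rule manipulation $H(Y^n\mid X^n)=H(Z^n\mid X^n)=H(Z^n)$ is valid; everything after that is immediate from Theorem~\ref{thm:Entropy rate appx}.
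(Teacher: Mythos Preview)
Your proposal is correct and follows essentially the same route as the paper: establish $C=1-H(Z)$ from the additive structure $Y_n=X_n\oplus Z_n$ and the independence of $Z^n$ from the input, then invoke Theorem~\ref{thm:Entropy rate appx} with $q=2$. The only cosmetic difference is that the paper expands $I(X^n;Y^n)$ via the chain rule and argues term by term that the i.i.d.\ uniform input maximizes each $H(Y_i\mid Y_1^{i-1})$, whereas you work directly at the block level with $I(X^n;Y^n)=H(Y^n)-H(Z^n)$ and $H(Y^n)\le n$; your version is slightly cleaner, and your remark about the sign on $H_N(\mu)$ is well taken.
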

\begin{proof}
It can be shown that the capacity of the Gilbert channel is
\beann
C = 1 - H(Z)
\eeann
where $H(Z)$ is the entropy rate of the noise process $\{Z_n\}$. Indeed we can write
\beann
I(X^n;Y^n) = \sum_{i=1}^N H(Y_i| Y_1^{i-1}) - H(Y_i|X_i,X_1^{i-1},Y_1^{i-1})
\eeann
Due to the relation $Y_n = X_n \oplus Z_n$ between the input, output and noise we observe that
\beann
H(Y_i|X_i,X_1^{i-1},Y_1^{i-1}) = H(Z_i|Z_1^{i-1})
\eeann
Since Markov channel state process $S_n$ is independent of the input, and the noise process $Z_n$ is the hidden Markov process; $Z_n = \phi(S_n)$ for some function $\phi$ therefore $H(Z_i|Z_1^{i-1})$ is independent of the input distribution $p(X^n)$. Thus
\beann
C &=& \lim_{n\rightarrow \infty} \frac{1}{n} \max_{p(X^n)}\Big( \sum_{i=1}^n H(Y_i| Y_1^{i-1}) - H(Y_i|X_i,X_1^{i-1},Y_1^{i-1})\Big)\\
  &=& \lim_{n\rightarrow \infty} \frac{1}{n} \max_{p(X^n)} \Big(\sum_{i=1}^n H(Y_i| Y_1^{i-1}) - H(Z_i|Z_1^{i-1})\Big)\\
  &=&\lim_{n\rightarrow \infty} \frac{1}{n} \max_{p(X^n)} \sum_{i=1}^n H(Y_i| Y_1^{i-1}) - \lim_{n\rightarrow \infty}\frac{1}{n}\sum_{i=1}^n
 H(Z_i|Z_1^{i-1})
\eeann
Therefore
\bea
\label{eq:GilCap}
C &=&\lim_{n\rightarrow \infty} \frac{1}{n} \max_{p(X^n)} \sum_{i=1}^n H(Y_i| Y_1^{i-1}) - H(Z)
\eea
Now consider
\bea
\label{eq:GilEq1}
&&H(Y_i| Y_1^{i-1}) =\\
\,\,&& -\sum_{y_1,...,y_{i-1}}\big(\sum_y p(Y_i=y|y_1,...,y_{i-1}) \log p(Y_i=y|y_1,...,y_{i-1})\big)p(y_1,...,y_{i-1})\nonumber
\eea
The i.i.d uniform input distribution maximizes $p(Y_i=y|y_1,...,y_{i-1})$. Indeed
\bea
\label{eq:GilEq2}
p(Y_i=y|y_1,...,y_{i-1}) &=& \sum_{s_i} p(Y_i=y|S_i=s_i)p(S_i=s_i|y_1,...,y_{i-1})
\eea
Also,
\beann
p(Y_i=y|S_i=s_i) = \sum_x p(Y_i=y|X_i =x,S_i=s_i)p(X_i =x)
\eeann
Due to the symmetry of the channel $\sum_x p(Y_i=y|X_i =x,S_i=s_i)$ is independent of $Y_i$ therefore for the uniform i.i.d. input distribution the conditional density $p(Y_i=y|S_i=s_i)$ is a constant and is equal to $\frac{1}{2}$. Substituting this in \ref{eq:GilEq2} we get that $p(Y_i=y|y_1,...,y_{i-1}) = \frac{1}{2}$. The quantity 
\beann
\sum_y p(Y_i=y|y_1,...,y_{i-1}) \log p(Y_i=y|y_1,...,y_{i-1})
\eeann
in equation \ref{eq:GilEq1} for the uniform i.i.d. input distribution gets maximized to $2 \frac{1}{2}\log 2 = 1$. Therefore from equation \ref{eq:GilCap} we get 
\beann
C = 1 - H(Z)
\eeann
Using theorem \ref{thm:Entropy rate appx} we can bound the capacity of the Gilbert channel
\beann
1 + H_N(\mu) - B\gamma^{N+1} \leq C \leq 1 + H_N(\mu) + B\gamma^{N+1}
\eeann
\end{proof}
We assume the transition matrix $E = \begin{pmatrix} 0.8 & 0.2 \\ 0.25 &  0.75\end{pmatrix}$ for the channel transitions compute the capacity of the Gilbert channel using different values of $h$ parameter.

\begin{table}[ht]
\begin{center}
\begin{tabular}{|c|c|c|}	
\hline
h          &   C (lower bound)       & C (upper bound)\\
\hline
0.02     &    1.775537282409934   &       1.775537393396272\\
\hline
0.04      &   1.787283765040533   &         1.787284687386383   \\
\hline   
0.06      &   1.797179493884635   &          1.797187010251415   \\
\hline
0.08      &   1.805422838236253 &           1.805483229491273 \\
\hline
0.1        &   1.812015925779448 &          1.812497634488852 \\  
\hline
\end{tabular}
\caption{The upper and lower bounds on the capacity of the Gilbert Channel computed using the entropy rate formula}
\label{table:entropy rate 3}
\end{center}
\end{table}

\bibliographystyle{ieeetr}
\bibliography{ER_comp}
\end{document}